\newcommand{\F}{{\cal F}}
\newcommand{\Sc}{{\cal S}}
\newcommand{\mc}{\mathcal}
\newcommand{\be}{\begin{equation}}
\newcommand{\en}{\end{equation}}
\newcommand{\bea}{\begin{eqnarray}}
\newcommand{\ena}{\end{eqnarray}}
\newcommand{\beano}{\begin{eqnarray*}}
\newcommand{\enano}{\end{eqnarray*}}
\newcommand{\D}{{\mc D}}
\newcommand{\G}{{\cal G}}
\newcommand{\ltwo}{{\Lc^2(\mathbb{R})}}
\newcommand{\ltwod}{{\Lc^2(\mathbb{R}^d)}}
\newcommand{\ltwodue}{{\Lc^2(\mathbb{R}^2)}}
\newcommand{\scrdue}{{\Sc(\mathbb{R}^2)}}
\renewcommand{\l}{\langle}
\renewcommand{\r}{\rangle}
\newcommand{\pin}[2]{\l#1 , #2\r}
\newcommand{\Lc}{{\cal L}}
\newcommand{\1}{1 \!\! 1}
\newcommand{\Hil}{\mc H}
\newcommand{\un}{{\underline n}}
\newcommand{\um}{{\underline m}}
\newcommand{\ux}{{\underline x}}
\newcommand{\uzero}{{\underline 0}}
\newcommand{\uuno}{{\underline 1}}
\newcommand{\uk}{{\underline k}}
\newcommand{\ul}{{\underline l}}
\newtheorem{thm}{Theorem}
\newtheorem{prop}[thm]{Proposition}
\newtheorem{defn}[thm]{Definition}
\newenvironment{proof}{\noindent {\bf Proof:}}{\hfill$\Box$}
\def\theequation{\arabic{section}.\arabic{equation}}
\begin{document}

\begin{center}
{\Large \textbf{Multiplication of distributions in a linear gain and loss system}} \vspace{2cm%
}\\[0pt]

{\large F. Bagarello}
\vspace{3mm}\\[0pt]
Dipartimento di Ingegneria,\\[0pt]
Universit\`{a} di Palermo, I - 90128 Palermo,\\
and I.N.F.N., Sezione di Catania\\
E-mail: fabio.bagarello@unipa.it\\

\vspace{7mm}

\end{center}

\vspace*{2cm}

\begin{abstract}
\noindent We consider a model of coupled oscillators which can be seen as a gain and loss system. In the attempt to quantize the system we propose a new definition of multiplication between distributions, and we check that this definition can be adopted when checking the biorthonormality of the eigenstates of the Hamiltonian $H$ of the system, and of its adjoint $H^\dagger$. In the analysis carried out here, the role of {\em weak pseudo-bosonic} ladder operators is relevant.  
\end{abstract}

\vspace*{1cm}

{\bf Keywords:--}  Pseudo-bosons; Multiplication of distributions; Biorthogonal vectors; Gain and loss systems

\vfill

\newpage


\section{Introduction}

Since many decades simple electronic circuits have been used in connection with classical and, more recently, quantum mechanics, since they can provide concrete devices where some physical effect can be observed, or modelled. A well known example of such a map between Electronics and Classical Mechanics is given by  any {\em RLC-circuit} (RLCc), which is dynamically equivalent to a {\em damped harmonic oscillator} (DHO). This is because the time evolution of the charge in an RLCc, i.e. a circuit with an inductance, a resistance and a capacitor in series, is driven by exactly the same equation of motion of that of a DHO, with the dissipative effect of the resistance replaced by the friction, see e.g. \cite{wells,panu} and references therein. In \cite{baldiotti}-\cite{review2}, among others, a quantum version of the DHO, and therefore of the RLCc, has been considered both for a purely mathematical interest and in view of possible applications. In particular, due to its particularly simple form, a deep comprehension of the DHO/RLCc is surely a first step toward a better understanding of the correct quantization procedure for generic dissipative systems, see \cite{kanai,vitiello}, which are so relevant in several physical contexts.

 A similar interest is at the basis of other papers connected with related problems, \cite{circu2}-\cite{circu5}: in all these papers electronic circuits are analyzed, either in connection with PT-Quantum Mechanics and exceptional points, \cite{benbook}, or because they produce interesting results when one tries to quantize the system, not only for the physical consequences of this quantization, but also because of its many mathematical consequences.
 
 For instance, some interesting mathematics appears when  quantizing a DHO, and diagonalizing its Bateman Hamiltonian, \cite{bate}, by means of ladder operators. In particular, in \cite{nakano,deguchi} the authors claimed they could construct two biorthonormal bases of square integrable eigenfunctions for the Hamiltonian of the DHO. However, as shown in \cite{BGR,BGR2}, this claim was wrong, since in particular the vacua of the lowering operators needed in their construction are not square integrable functions, but Dirac delta distributions. Hence distributions are more relevant than functions in this context. And, in fact,  this was not the first appearance of distributions, and of the Dirac delta in particular, in the analysis of some physical system. For instance, they are the main object of research when dealing rigorously with quantum fields, \cite{streit}. More recently, distributions have proved to be the natural tool to use when looking for the (generalized) eigenstates of certain Hamiltonians, mostly connected with what we have called {\em weak pseudo-bosons}, see e.g. \cite{bag2020JPA}-\cite{bagJPA2022} and \cite{bagspringer} for a recent monograph on this and related topics.

 { In this paper we show how an extended version of the Bateman Lagrangian can be introduced and studied, and the role that distributions play in this analysis. More in details, we start  replacing the original pairs of uncoupled oscillators described by Bateman with different pairs of possibly interacting oscillators, one of which is still damped (hence, is a loss system), while the other is {\em amplified} (so that it can be interpreted as a gain system). The Bateman system can be recovered as a special case of our settings. The main physical result of our analysis is that we will able to quantize the system, and to find the eigenvalues and the eigenvectors of its related Hamiltonian $H$. However, these eigenvectors turn out {\bf not} to be functions. Indeed, they are distributions and, as such, they require some extra mathematical care to be properly considered, also in view of their relations with the eigenvectors of $H^\dagger$, the adjoint of $H$. In particular, biorthormality of these two families of eigenvectors\footnote{We should recall that, in presence of non self-adjoint operators, orthonormality of eigenstates is usually replaced by biorthogonality of the eigestates of the {\em physically relevant} operator (e.g., the Hamiltonian $H$), and those of its adjoint (e.g., $H^\dagger$), \cite{benbook,bagspringer}} should be defined, since no {\em ordinary} scalar product can be naturally introduced in our analysis. This aspect is discussed in many concrete situations in the literature, \cite{benbook,bagspringer}, but mostly in a Hilbert space settings. Here, Hilbert spaces are not enough. For this reason, to analyze biorthonormality of these vectors, we need to consider a new concept of scalar product, which extends the usual one in $\ltwo$. This is possibly the most interesting (mathematical) result of this paper. This definition will be applied to our specific system, and indeed a kind of biorthonormality will be deduced. Surprisingly enough, the notion of Abel summation will be quite relevant in this analysis.

 
The paper is organized as follows: in Section \ref{sect2} we will review some previous results on the DHO, to introduce the notation and to stress some essential aspects of what was already discussed in the literature. Also, this preliminary analysis is useful since the extended system considered in this paper reduces to that, after a suitable change of variables. In Section \ref{sect3} we introduce our gain-loss linear circuit, and we show how to introduce ladder operators in its description. As already stated, this forces us to deal with distributions. This motivates our analysis in Section \ref{sect4}, where we propose a new definition of multiplication between distributions, and we deduce some of its properties. In Section \ref{sect5} we then show how this multiplication can be used in the analysis of our specific gain-loss system. Section \ref{sect6} contains our conclusions.{ To keep the paper self-contained, we  list a series of definitions and properties of pseudo-bosonic ladder operators in Appendix A, while in Appendix B we prove some useful identities used in Section \ref{sect5}. Also,  Appendix C contains some technical results useful for us.}

}

\section{Preliminaries}\label{sect2}

We devote this section to a brief review of what was discussed in \cite{BGR,BGR2}. This is relevant in view of what follows, since we will show that the system introduced in Section \ref{sect3} can be rewritten as the one we are going to consider here. 

The classical equation for the DHO is $m\ddot x+\gamma \dot x+kx=0$, in which $m,\gamma$ and $k$ are the physical positive quantities of the oscillator: the mass, the friction coefficient and the spring constant\footnote{They are in one-to-one correspondence with the inductance $L$, the resistance $R$ and the inverse capacity $C^{-1}$ of a RLCc.}. The Bateman lagrangian, \cite{bate}, is
\be L_0=m\dot x\dot y+\frac{\gamma}{2}(x\dot y-\dot xy)-kxy,
\label{21}\en
which, other than the previous equation, produces also $m\ddot y-\gamma \dot y+ky=0$, the differential equation associated to the virtual AHO, see also \cite{fesh}. The conjugate momenta are $$p_x=\frac{\partial L_0}{\partial \dot x}=m\dot y-\frac{\gamma}{2}\,y,\qquad
p_y=\frac{\partial L_0}{\partial \dot y}=m\dot x+\frac{\gamma}{2}\,y,
$$
and the corresponding classical Hamiltonian is
\be
H_0=p_x\dot x+p_y \dot y-L_0=\frac{1}{m} p_xp_y+\frac{\gamma}{2m}\left(yp_y-xp_x\right)+\left(k-\frac{\gamma^2}{4m}\right)xy.
\label{22}\en
By introducing the new variables $x_1$ and $x_2$ through 
\be
x=\frac{1}{\sqrt{2}}(x_1+x_2), \qquad y=\frac{1}{\sqrt{2}}(x_1-x_2),
\label{23}\en
$L_0$ and $H_0$ can be written as follows:
$$
L_0=\frac{m}{2}(\dot x_1^2-\dot x_2^2)+\frac{\gamma}{2}(x_2\dot x_1-x_1\dot x_2)-\frac{k}{2}(x_1^2-x_2^2)
$$
and
$$
H_0=\frac{1}{2m}\left(p_1-\frac{\gamma}{2}x_2\right)^2-\frac{1}{2m}\left(p_2-\frac{\gamma}{2}x_1\right)^2+\frac{k}{2}(x_1^2-x_2^2),
$$
where $p_1=\frac{\partial L_0}{\partial \dot x_1}=m\dot x_1+\frac{\gamma}{2}\,x_2$ and $p_2=\frac{\partial L_0}{\partial \dot x_2}=m\dot x_2-\frac{\gamma}{2}\,x_1$. By putting $\omega^2=\frac{k}{m}\,-\frac{\gamma^2}{4m^2}$ we can rewrite $H_0$ as follows:
\be
H_0=\left(\frac{1}{2m}p_1^2+\frac{1}{2}m\omega^2x_1^2\right)-\left(\frac{1}{2m}p_2^2+\frac{1}{2}m\omega^2x_2^2\right)-\frac{\gamma}{2m}(p_1x_2+p_2x_1).
\label{24}\en
We will here only consider $\omega^2>0$. The case {$\omega^2\leq0$}  has been briefly considered in \cite{BGR}.

Following \cite{nakano} we impose the following canonical quantization rules between $x_j$ and $p_k$: $[x_j,p_k]=i\delta_{j,k}\1$, working in unit $\hbar=1$. Here $\1$ is the identity operator. This is equivalent to the choice in \cite{fesh}. Ladder operators can now be easily introduced:
\be a_k=\sqrt{\frac{m\omega}{2}}\,x_k+i\sqrt{\frac{1}{2m\omega}}\,p_k,
\label{25}\en
$k=1,2$. These are bosonic operators since they satisfy the canonical commutation rules: $[a_j,a^\dagger_k]=\delta_{j,k}\1$. Furthermore, they are densely defined on any Schwartz test function. In particular, $[a_j,a^\dagger_k]\varphi(x)=\delta_{j,k}\varphi(x)$, for all $\varphi(x)\in\Sc(\mathbb{R})$. It might be useful to recall that $\Sc(\mathbb{R})$ is the set of all $C^\infty$ functions which decrease, together with their derivatives, faster than any inverse power of $x$, \cite{vlad}:
$$\Sc({\mathbb{R}})= \left\{g(x)\in C^\infty: \lim_{|x|,\,\infty}
{|x|^k g^{(l)} (x)}=0 \quad \forall k,l \in {\mathbb{N}_0}\right\},$$
where $\mathbb{N}_0=\mathbb{N}\cup\{0\}$.
\vspace{2mm}

In terms of these operators the quantum version of the Hamiltonian $H_0$ in (\ref{24}) can be written as
\be
	H_0=\omega\left(a_1^\dagger a_1-a_2^\dagger a_2\right)+\frac{i\gamma}{2m}\left(a_1a_2-a_1^\dagger a_2^\dagger\right).
\label{26}\en

Following again \cite{nakano} we introduce the operators:
\be
A_1=\frac{1}{\sqrt{2}}(a_1-a_2^\dagger), \quad A_2=\frac{1}{\sqrt{2}}(-a_1^\dagger+a_2),
\label{27}\en
as well as 
\be
B_1=\frac{1}{\sqrt{2}}(a_1^\dagger+a_2), \quad B_2=\frac{1}{\sqrt{2}}(a_1+a_2^\dagger).
\label{28}\en
They satisfy the following requirements:
\be
[A_j,B_k]\varphi(x)=\delta_{j,k}\varphi(x),
\label{29}
\en
$\forall \varphi(x)\in\Sc(\mathbb{R})$. We observe that $B_j\neq A_j^\dagger$, $j=1,2$. Moreover, $A_1=-A_2^\dagger$ and $B_1=B_2^\dagger$. It might be useful to stress that the map in (\ref{27})-(\ref{28}) is reversible, since $a_j$ and $a_j^\dagger$ can be recovered out of $A_j$ and $B_j$. 

In \cite{bagspringer,baginbagbook} operators of this kind, named {\em pseudo-bosonic}, were analyzed in detail, producing several interesting results mainly connected with their nature of ladder operators.

In terms of these operators $H_0$ can now be written as follows:
\be
	H_0=\omega\left(B_1A_1-B_2A_2\right)+\frac{i\gamma}{2m}\left(B_1A_1+B_2A_2+\1\right),\\
\label{210}\en 
which only depends on the pseudo-bosonic number operators $N_j=B_jA_j$, \cite{baginbagbook}. This is exactly the same Hamiltonian found in \cite{nakano}, and it is equivalent to that given in \cite{dekk2,fesh} and in many other papers on this subject. In \cite{BGR} we proved the following theorem, stating that the pseudo-bosonic lowering operators $A_1$, $A_2$, $B_1^\dagger$ and $B_2^\dagger$ do not admit square integrable vacua.

\begin{prop}\label{prop1}
	There is no non-zero function $\varphi_{00}(x_1,x_2)$ satisfying $$A_1\varphi_{00}(x_1,x_2)=A_2\varphi_{00}(x_1,x_2)=0.$$ Also, there is no non-zero function $\psi_{00}(x_1,x_2)$ satisfying $$B_1^\dagger\psi_{00}(x_1,x_2)=B_2^\dagger\psi_{00}(x_1,x_2)=0.$$
\end{prop}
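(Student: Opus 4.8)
The plan is to work out the differential equations that the vacua must satisfy explicitly, using the representation $x_k \mapsto x_k$, $p_k \mapsto -i\partial_{x_k}$, and then show these equations are incompatible with square integrability. First I would express the four operators $A_1, A_2, B_1^\dagger, B_2^\dagger$ directly in terms of $x_1, x_2, \partial_{x_1}, \partial_{x_2}$ by combining (\ref{25}), (\ref{27}) and (\ref{28}). Since $a_k = \sqrt{m\omega/2}\,x_k + \sqrt{1/(2m\omega)}\,\partial_{x_k}$ and $a_k^\dagger = \sqrt{m\omega/2}\,x_k - \sqrt{1/(2m\omega)}\,\partial_{x_k}$, each of $A_1$ and $A_2$ is a first-order linear differential operator with linear coefficients, and likewise for $B_1^\dagger, B_2^\dagger$. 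The system $A_1\varphi_{00} = A_2\varphi_{00} = 0$ therefore becomes a pair of coupled linear first-order PDEs; I would solve them by noting that the combination $A_1 - A_2$ and $A_1 + A_2$ (or equivalently going back to $a_1, a_2^\dagger$) decouple the equations into conditions that pin down $\varphi_{00}$ up to a constant.

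The key observation I expect to exploit is the sign structure: because $A_1 = \frac{1}{\sqrt2}(a_1 - a_2^\dagger)$, the equation $A_1\varphi_{00}=0$ reads, after substituting the explicit forms, something like $\big(\sqrt{m\omega/2}(x_1 - x_2) + \sqrt{1/(2m\omega)}(\partial_{x_1} + \partial_{x_2})\big)\varphi_{00} = 0$, and similarly $A_2\varphi_{00}=0$ gives $\big(\sqrt{m\omega/2}(x_2-x_1) - \sqrt{1/(2m\omega)}(\partial_{x_1}+\partial_{x_2})\big)\varphi_{00}=0$ — i.e. $A_1$ and $A_2$ act nontrivially only along the diagonal direction $u = (x_1+x_2)/\sqrt2$ and leave the antidiagonal free, or vice versa. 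Solving the resulting first-order ODE in the relevant variable produces a Gaussian in one variable but with the \emph{wrong sign} in the exponent (a growing Gaussian $e^{+c u^2}$ rather than $e^{-c u^2}$), and an arbitrary (hence non-decaying) dependence on the orthogonal variable. Either feature alone already destroys square integrability over $\mathbb{R}^2$, so the only $L^2$ solution is $\varphi_{00}\equiv 0$. The argument for $\psi_{00}$ with $B_1^\dagger, B_2^\dagger$ is identical in spirit: $B_1^\dagger = \frac{1}{\sqrt2}(a_1 + a_2^\dagger)$, $B_2^\dagger = \frac{1}{\sqrt2}(a_1^\dagger + a_2)$, which again mix the two oscillators so that the common kernel equation forces a non-normalizable profile.

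I would present this cleanly by first changing to the variables $x_1, x_2$ (or to $u=(x_1+x_2)/\sqrt2$, $v=(x_1-x_2)/\sqrt2$, which is natural here), writing the two PDEs, observing they are equivalent to one transport/ODE condition plus one constraint, integrating, and reading off that any nonzero solution grows at infinity in at least one direction. The main obstacle — really the only delicate point — is bookkeeping: keeping the $\sqrt{m\omega}$ factors and the signs straight through the substitution, and making sure that "non-square-integrable" is argued rigorously (e.g. the solution is either a genuine function that grows, or at best a product of a $\delta$-type object in one variable with a bad Gaussian in the other, neither of which lies in $L^2(\mathbb{R}^2)$). Once the explicit form of the general solution of the kernel equations is in hand, the failure of $L^2$-membership is immediate, so the proof is short; the care lies entirely in deriving that explicit form correctly.
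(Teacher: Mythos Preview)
Your overall strategy---write the kernel equations as first-order PDEs in $(x_1,x_2)$, pass to rotated coordinates, and read off that no nonzero $L^2$ solution exists---is exactly the right one, and it is the route taken in the reference \cite{BGR} that the paper cites for this proposition (the present paper does not reprove it). However, your preview of the computation contains a sign slip that leads you to the wrong expected form of the solution.

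From $a_k=\sqrt{m\omega/2}\,x_k+\sqrt{1/(2m\omega)}\,\partial_{x_k}$ and $a_k^\dagger=\sqrt{m\omega/2}\,x_k-\sqrt{1/(2m\omega)}\,\partial_{x_k}$ one finds
\[
A_1=\tfrac{1}{\sqrt2}\Big(\sqrt{\tfrac{m\omega}{2}}(x_1-x_2)+\sqrt{\tfrac{1}{2m\omega}}(\partial_{x_1}+\partial_{x_2})\Big),\qquad
A_2=\tfrac{1}{\sqrt2}\Big(\sqrt{\tfrac{m\omega}{2}}(x_2-x_1)+\sqrt{\tfrac{1}{2m\omega}}(\partial_{x_1}+\partial_{x_2})\Big),
\]
with a \emph{plus} sign in front of the derivative term in $A_2$, not the minus you wrote. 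With your sign, $A_1$ and $-A_2$ would coincide and you would indeed get one ODE plus an arbitrary transverse profile. With the correct sign the two equations are independent: taking sum and difference gives
\[
(A_1+A_2)\varphi_{00}=0\ \Longrightarrow\ (\partial_{x_1}+\partial_{x_2})\varphi_{00}=0,\qquad
(A_1-A_2)\varphi_{00}=0\ \Longrightarrow\ (x_1-x_2)\,\varphi_{00}=0.
\]
No growing Gaussian appears. The first equation forces $\varphi_{00}=h(x_1-x_2)$, and then the second forces $v\,h(v)=0$; for a genuine (locally integrable) function this means $h=0$ a.e., hence $\varphi_{00}=0$. The only nonzero \emph{distributional} solution is $\varphi_{00}=\alpha\,\delta(x_1-x_2)$, which is precisely what the paper reports. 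The $\psi_{00}$ case is analogous: $B_1^\dagger\pm B_2^\dagger$ yield $(x_1+x_2)\psi_{00}=0$ and $(\partial_{x_1}-\partial_{x_2})\psi_{00}=0$, giving $\psi_{00}=\beta\,\delta(x_1+x_2)$ distributionally and $\psi_{00}=0$ as a function. So your plan works once the sign is fixed, but the mechanism is ``multiplication by $x_1\mp x_2$ kills the function'' rather than ``the exponent has the wrong sign.''
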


\vspace{2mm}

We refer to \cite{BGR,BGR2} for further results on this problems. In particular, it was shown that 
the vacua of $A_j$ and $B_j^\dagger$, $j=1,2$, are respectively $\varphi_{00}(x_1,x_2)=\alpha\delta(x_1-x_2)$ and $\psi_{00}(x_1,x_2)=\beta\delta(x_1+x_2)$: they are not functions, but distributions. $\alpha$ and $\beta$ are some sort of {\em normalization constants}.
 Here we just want to stress that, in these latter papers, our analysis stopped at this level because our interest was mainly focused in proving that it was not possible to find square-integrable eigenfunctions of $H_0$, contrarily to what claimed in \cite{nakano,deguchi}. What is more interesting for us, now, is the possibility to answer the following questions:

$\bullet$ is it possible to replace the pair DHO-AHO with some more general system sharing similar properties, and a similar quantization procedure?

$\bullet$ is it possible to construct a set of {\em biorthonormal-like} distributions out of $\varphi_{00}(x_1,x_2)$ and $\psi_{00}(x_1,x_2)$ using the pseudo-bosonic raising operators as described in Appendix A?

We will see that both these questions can be successfully considered, and in particular the second question will force us to introduce an interesting mathematical tool, which can be used to define a class of multiplications between distributions.

\section{Our system}\label{sect3}

In this section we will consider the first question raised above, proposing a simple classical Lagrangian which generalizes the one in (\ref{21}), and which describes, in view of its interpretation as a gain-loss system,  two coupled DHO-AHO. The idea is very simple: we just add to $L_0$ in (\ref{21}) another term, $L_1$, which is again quadratic in the variables $x$ and $y$.  With a proper choice of $L_1$, and of the parameters of the system, we will be able to describe a new pair of coupled oscillators, and to quantize the system in a similar way as we discussed in Section \ref{sect2}, facing with similar problems.

Let us consider
\be
L_1=A(m\dot x^2-kx^2)+B(m\dot y^2-ky^2)
\label{31}\en
and
\be
L=L_0+L_1=m\dot x\dot y+\frac{\gamma}{2}(x\dot y-\dot xy)-kxy+A(m\dot x^2-kx^2)+B(m\dot y^2-ky^2).
\label{32}\en
Here $A$ and $B$ are constants whose values will be constrained later.

\vspace{2mm}

{\bf Remark:--} The Lagrangian in (\ref{32}) is a particular case of a more general choice $L=L_0+L_1$, with $L_1=f(y,\dot y)+g(x,\dot x)$. Not surprisingly, also in this general case, if we write the Hamiltonian $H=p_x\dot x+p_y\dot y-L$, and we compute its time derivative, we get $\dot H=0$, which can be interpreted as some kind of energy conservation for the coupled system.

\vspace{2mm}

From $L$ in (\ref{32}) we get the following set of coupled differential equations:
\be\label{33}
\left\{
\begin{array}{ll}
	m\ddot x+\gamma \dot x+kx=-2B\left(m\ddot y+ky\right)\\
	m\ddot y-\gamma \dot y+ky=-2A\left(m\ddot x+kx\right),\\
\end{array}
\right.
\en
which can also be rewritten, after some minor manipulations, as
\be\label{34}
\left\{
\begin{array}{ll}
	m'\ddot x+\gamma \dot x+k'x=-2B\gamma \dot y\\
	m'\ddot y-\gamma \dot y+k'y=2A\gamma \dot x,\\
\end{array}
\right.
\en
where $m'=m(1-4AB)$ and $k'=k(1-4AB)$, which are both positive if $AB<\frac{1}{4}$. It is now possible to rewrite $L$ in a form which is quite close to $L_0$ in (\ref{21}), with a change of variable $(x,y)\rightarrow(X_1,Y_1)$:
\be\label{35}
\left\{
\begin{array}{ll}
	x=\alpha_x X_1+\beta_x Y_1,\\
	y=\alpha_y X_1+\beta_y Y_1,\\
\end{array}
\right.
\en
where $\alpha_x$, $\alpha_y$, $\beta_x$ and $\beta_y$ must satisfy the condition $\alpha_x\beta_y-\beta_x\alpha_y\neq0$, in order to have an invertible transformation. From now on, we take
\be
\alpha_x=-\,\frac{\alpha_y}{2A}\left(1-\sqrt{1-4AB}\right), \qquad \beta_x=-\,\frac{\beta_y}{2A}\left(1+\sqrt{1-4AB}\right),
\label{36}\en
so that $\alpha_x\beta_y-\beta_x\alpha_y=\frac{\alpha_y\beta_y}{A}\sqrt{1-4AB}$, which is different from zero if $\alpha_y,\beta_y\neq0$, under our constraint on $AB$. After some manipulation we get that
\be
L= m_1\dot X_1\dot Y_1+\frac{\gamma_1}{2}(X_1\dot Y_1-\dot X_1Y_1)-k_1X_1Y_1,
\label{37}\en
where we have introduced
\be
m_1=\frac{m\alpha_y\beta_y}{A}(4AB-1), \quad k_1=\frac{k\alpha_y\beta_y}{A}(4AB-1),\quad  \gamma_1=\frac{\gamma\alpha_y\beta_y}{A}\sqrt{1-4AB}.
\label{38}\en
Recalling that $4AB-1<0$, it is clear that $k_1,m_1>0$ only if $\frac{\alpha_y\beta_y}{A}<0$, which is what we will assume from now on. However, under this condition, it follows that $\gamma_1=-|\gamma_1|<0$. We rewrite $L$ in (\ref{37}) as
\be
L= m_1\dot X_1\dot Y_1+\frac{|\gamma_1|}{2}(Y_1\dot X_1-\dot Y_1X_1)-k_1X_1Y_1.
\label{39}\en
This Lagrangian describes again a coupled DHO-AHO as the original one in (\ref{21}), where $Y_1$ is the coordinate of the DHO ($Y_1\rightleftarrows x$), while $X_1$ is that of the AHO ($X_1\rightleftarrows y$). Hence we can repeat the same steps as in Section \ref{sect2}, and in particular quantize the system and diagonalize the Hamiltonian in terms of pseudo-bosonic operators.  Formula (\ref{23}) is replaced here by
$$
Y_1=\frac{1}{\sqrt{2}}(x_1+x_2), \qquad X_1=\frac{1}{\sqrt{2}}(x_1-x_2).
$$
Then, introducing $p_1$ and $p_2$ as before, $p_j=\frac{\partial L}{\partial \dot x_j}$, we get $p_1=m_1\dot x_1+\frac{|\gamma_1|}{2}\,x_2$ and $p_2=m_1\dot x_2-\frac{|\gamma_1|}{2}\,x_1$, and the classical Hamiltonian $H_0$ in (\ref{24}) should be replaced now by
\be
H=\left(\frac{1}{2m_1}p_1^2+\frac{1}{2}m_1\omega_1^2x_1^2\right)-\left(\frac{1}{2m_1}p_2^2+\frac{1}{2}m_1\omega_1^2x_2^2\right)-\frac{|\gamma_1|}{2m_1}(p_1x_2+p_2x_1),
\label{310}\en
where $\omega_1^2=\frac{k_1}{m_1}\,-\frac{|\gamma_1|^2}{4m_1^2}=\frac{k}{m}\,-\frac{\gamma^2}{4m^2(1-4AB)}$, which we assume here to be positive.

Next we quantize the system, requiring that
$[x_j,p_k]=i\delta_{j,k}\1$, and we introduce the bosonic operators
\be a_k=\sqrt{\frac{m\omega}{2}}\,x_k+i\sqrt{\frac{1}{2m\omega}}\,p_k,
\label{310bis}\en
$k=1,2$, and their combinations
\be
A_1=\frac{1}{\sqrt{2}}(a_1-a_2^\dagger), \quad A_2=\frac{1}{\sqrt{2}}(-a_1^\dagger+a_2),\quad B_1=\frac{1}{\sqrt{2}}(a_1^\dagger+a_2), \quad B_2=\frac{1}{\sqrt{2}}(a_1+a_2^\dagger).
\label{311}\en
These operators satisfy, see (\ref{29}), the commutation rule
\be
[A_j,B_k]\varphi(x)=\delta_{j,k}\varphi(x),
\label{312}
\en
$\forall \varphi(x)\in\Sc(\mathbb{R})$, as well as the other properties stated in Section \ref{sect2}. An essential consequence is that $H$ is diagonal in these operators,
\be
H=\omega_1\left(B_1A_1-B_2A_2\right)+\frac{i|\gamma_1|}{2m_1}\left(B_1A_1+B_2A_2+\1\right),\\
\label{313}\en 
 and Proposition \ref{prop1} applies. In particular, the vacua of $A_j$ and $B_j^\dagger$, $j=1,2$, are respectively $\varphi_{00}(x_1,x_2)=\alpha\delta(x_1-x_2)$ and $\psi_{00}(x_1,x_2)=\beta\delta(x_1+x_2)$.

Going back to our first question at the end of Section \ref{sect2}, we have seen here that it is indeed possible to construct more general systems\footnote{Of course $L$ in (\ref{32}) returns the original system in Section \ref{sect2} if $A=B=0$. Otherwise the two oscillators are coupled, which was not the case for $L_0$ in (\ref{21}).} which, after a certain change of variables, turn out not to be different from the pair of oscilators described by the Bateman Lagrangian. Next, because of the role the distributions play in our analysis, we consider a mathematical interlude on a possible definition of a class of multiplications between distributions. We should maybe stress that, in fact, the content of Section \ref{sect4} is (in our opinion) the most relevant mathematical result of this paper.

\section{Multiplication of distributions}\label{sect4}

In \cite{vlad} a possible way to introduce a multiplication between distributions was discussed. It is based on the simple fact that the scalar product between two {\em good } functions $f(x)$ and $g(x)$, for instance $f(x),g(x)\in\Sc(\mathbb{R})$, can be written in terms of a convolution between $\overline{f(x)}$ and $\tilde{g}(x)=g(-x)$: $\left<f,g\right>=(\overline{f}* \tilde{g})(0)$. Hence it is natural to define the scalar product between two elements $F(x), G(x)\in\Sc'(\mathbb{R})$ as the following convolution:
\be
\left<F,G\right>=(\overline{F}* \tilde{G})(0),
\label{41}\en
whenever this convolution exists, which is not always true. Notice that, in order to compute $\left<F,G\right>$, it is first necessary to compute $(\overline{F}* \tilde{G})[f]$, $f(x)\in\Sc(\mathbb{R})$, and this can be done by using the equality $(\overline{F}* \tilde{G})[f]=\left<F,G*f\right>$ which, again, is not always well defined. It is maybe useful to stress that $(\overline{F}* \tilde{G})[f]$ represents here the action of $(\overline{F}* \tilde{G})(x)$ on the function $f(x)$.

This approach has been used in some concrete situations in recent years, mainly to check if the generalized eigenstates of some non self-adjoint operator $\hat H$ are biorthonormal (with respect to this generalized product) to those of $\hat H^\dagger$. Some results in this direction can be found in \cite{bag2020JPA}-\cite{bagspringer}.

However, this approach does not seem to be flexible enough to cover also the situation discussed in this paper, i.e. to deal with the set of weak eigenvectors of the Hamiltonian of the system in Section \ref{sect3}. This is because, among other difficulties, it is very hard to take care properly of the domains of the various operators involved in this analysis, but also because it is quite complicated to perform explicit computations even in simple cases. For this reason we introduce now a different multiplication between distributions, and we analyze some of its properties. In Section \ref{sect5} we will show how this new definition works for our gain-loss system. To be general, we work here with $\ltwod$, $d\geq1$. First of all we introduce an orthonormal, total, set of vectors in $\ltwod$:
\be
\F_e=\{e_{\underline n}(\underline x)\in \Sc(\mathbb{R}^d), \quad{\underline n}=(n_1,n_2,\ldots,n_d) \},
\label{42}\en
where each $n_j=0,1,2,3,\ldots$. For instance, if $d=1$ the set $\F_e$ could be the set of the eigenstates of the quantum harmonic oscillator. If $d\geq2$, $\F_e$ can be constructed as tensor product of these 1-d functions, and so on.
Due to the nature of $\F_e$, for all $f(x), g(x)\in \ltwod$ we have that
$$
\pin{f}{g}=\sum_{\un}\pin{f}{e_\un}\pin{e_\un}{g}=\sum_{\un} \overline{f}[e_\un]\,g[\overline{e_\un}]=\sum_{\un} \overline{f}[e_\un]\,g[{e_\un}],
$$
if we further assume that each $e_\un(\underline x)$ is real, for simplicity. We are using the following notation:
\be
\overline{h}[c]=\int_{\mathbb{R}^d} \overline{h(\ux)}\,c(\ux)\, d\ux=\pin{h}{c}.
\label{42bis}\en
 In the Parceval identity above the particular choice of $\F_e$ is not relevant, as far as $f(x), g(x)\in \ltwod$. Now, the fact that $e_\un(\ux)\in\Sc(\mathbb{R}^d)$ implies that, for all $K(\ux)\in\Sc'(\mathbb{R}^d)$, the set of tempered distributions, \cite{vlad}, the following quantity is well defined:
\be
\overline{K}[e_\un]=\pin{K}{e_\un}.
\label{43}\en
What might exist, or not, is the following sum
\be
\pin{F}{G}_e=\sum_{\un}\overline{F}[e_\un]G[e_\un],
\label{44}\en
where $F(\ux), G(\ux)\in\Sc'(\mathbb{R}^d)$. This suggests the following:

\begin{defn}\label{def1}
	Two tempered distributions $F(\ux), G(\ux)\in\Sc'(\mathbb{R}^d)$ are $\F_e$-multiplicable if the series in (\ref{44}) converges.  
\end{defn}

What we have discussed before implies that all the square integrable functions are mutually $\F_e$-multiplicable, and  the result is independent of the specific choice of $\F_e$. This means that Definition \ref{def1} makes sense on a large set of tempered distributions, all those defined by ordinary square-integrable functions. Moreover, at least for these functions, (\ref{44}) and (\ref{41}) coincide. We will show in the next section that $\pin{F}{G}_e$ is also well defined in other cases. However, for generic elements of $\Sc'(\mathbb{R})$, it is not granted a priori that $\pin{F}{G}_e$ is independent of the choice of $\F_e$.

The following results are natural extensions of the properties of any {\em ordinary} scalar product to   $\pin{.}{.}_e$.

\vspace{2mm}

{\bf Result $\sharp 1$: } If $F(\ux), G(\ux)\in\Sc'(\mathbb{R}^d)$ are such that $\pin{F}{G}_e$ exists, then also $\pin{G}{F}_e$ exists and
\be
\pin{F}{G}_e=\overline{\pin{G}{F}_e}.
\label{45}\en
Indeed we have,  recalling that the series in (\ref{44}) converges, 
$$
\overline{\pin{F}{G}_e}=\sum_{\un}\overline{\pin{F}{e_\un}}\,\overline{\pin{e_\un}{G}}=\sum_{\un}\pin{G}{e_\un}\pin{e_\un}{F}=\pin{G}{F}_e,
$$
which in particular implies that $\pin{G}{F}_e$ exists, too.

\vspace{2mm}

{\bf Result $\sharp 2$: }  
If $F(\ux), G(\ux), L(\ux)\in\Sc'(\mathbb{R}^d)$ are such that $\pin{F}{G}_e$ and $\pin{F}{L}_e$ exist, then also $\pin{F}{\alpha G+\beta L}_e$ exists, for all $\alpha,\beta\in\mathbb{C}$, and
\be
\pin{F}{\alpha G+\beta L}_e=\alpha\,\pin{F}{ G}_e+\beta\pin{F}{L}_e.
\label{46}\en
Then the $\F_e$-multiplication is linear in the second variable. The proof is trivial and will not be given here. Of course, the $\F_e$-multiplication is anti-linear in the first variable.

\vspace{2mm}

{\bf Result $\sharp 3$: } 
If $F(\ux)\in\Sc'(\mathbb{R}^d)$ is such that $\pin{F}{F}_e$ exists, then $\pin{F}{F}_e\geq0$. In particular, if $\pin{F}{F}_e=0$, then $F[f]=0$ for all $f(x)\in\Lc_e$, the linear span of the $e_\un(\ux)$'s.

In fact, from (\ref{44}) we have
$$
\pin{F}{F}_e=\sum_{\un}|F[e_\un]|^2,
$$
which is never negative. Moreover, $\pin{F}{F}_e=0$ if and only if $F[e_\un]=0$ for all $\un$, which, because of the linearity of $F$, implies our claim.

\vspace{3mm}

{\bf Remark:--} It is not possible to conclude that $F=0$, even if $\pin{F}{F}_e=0$. The reason is that, to conclude that $F=0$, we should check that $F[g]=0$ for all $g(\ux)\in\Sc(\mathbb{R}^d)$. Now, since $\Sc(\mathbb{R}^d)\subset\ltwod$, it is clear that $g(\ux)=\|.\|-\lim_{\{N_k\},\infty}\sum_{\un=\underline 0}^{\underline N}\pin{e_\un}{g}\,e_\un(\ux)$, where $\underline N=(N_1,N_2,\ldots,N_d)$, with $N_j<\infty$, $\underline 0=(0,0,\ldots,0)$, and the convergence is in the norm of $\ltwod$. But this convergence does not imply that the same sequence converges in the topology $\tau_\Sc$ of $\Sc(\mathbb{R}^d)$. Hence, the continuity of $F$ is not sufficient to conclude that $$F[g]=\lim F\left[\sum_{\un=\underline 0}^{\underline N}\pin{e_\un}{g}\,e_\un(\ux)\right]=0.$$

\vspace{2mm}

It might be interesting to observe that for square-integrable functions $f(x)$ and $g(x)$ the adjoint of any bounded operator $X$ satisfies the equality $\pin{X^\dagger f}{g}_e=\pin{f}{Xg}_e$. This is a consequence of the analogous relation for $\pin{.}{.}$ and of the identity $\pin{f}{g}=\pin{f}{g}_e$, true $\forall f(\ux),g(\ux)\in\ltwod$. However, this is no longer granted for $F(\ux), G(\ux)\in\Sc'(\mathbb{R}^d)$. In fact, even if $F$ and $G$ are $\F_e$-multiplicable, and if $X^\dagger F, XG\in\Sc'(\mathbb{R}^d)$, there is no general reason for $\pin{X^\dagger F}{G}_e$ and $\pin{F}{XG}_e$ to exist, and to be equal. This is, in our opinion, one of the many points of the $\F_e$ multiplication which deserves a deeper investigation. Another relevant aspect of this multiplication concerns the {\em optimal} choice of $\F_e$, if any. We will comment on this particular aspect later on.

\section{Orthogonality of eigenstates}\label{sect5}

In Section \ref{sect3} we have deduced the vacua of the pseudo-bosonic lowering operators $A_j$ and $B_j^\dagger$. We will now use the standard pseudo-bosonic strategy, in its weak form, see \cite{bag2020JPA}-\cite{bagspringer}, to construct a set of distributions which are the (generalized) eigenstates of $H$ in (\ref{313}) and of its adjoint. In what follows we will use what we have discussed in Section \ref{sect4}, focusing on the case $d=2$.

In analogy with (\ref{A2}), after finding the vacua, the second step in our construction consists in using the raising pseudo-bosonic operators to construct, out of the vacua, two families of vectors. In particular, we put
\be
\varphi_{n_1,n_2}(x_1,x_2)=\frac{1}{\sqrt{n_1!\,n_2!}}B_1^{n_1}B_2^{n_2}\varphi_{0,0}(x_1,x_2),
\label{51}\en
and
\be
\psi_{n_1,n_2}(x_1,x_2)=\frac{1}{\sqrt{n_1!\,n_2!}}(A_1^\dagger)^{n_1}(A_2^\dagger)^{n_2}\psi_{0,0}(x_1,x_2),
\label{52}\en
where $n_1,n_2=0,1,2,3,\ldots$. As in Section \ref{sect4}, we will often use the notation $\un=(n_1,n_2)$ and $\ux=(x_1,x_2)$. These vectors are, clearly, not square-integrable functions. Indeed, they are tempered distributions, as it is already clear from their expressions for $\un=(0,0)$. Because of formulas (\ref{310bis}), (\ref{311}), and the fact that $p_k=-i\frac{\partial}{\partial x_k}$, $k=1,2$, it follows that $\varphi_{n_1,n_2}(x_1,x_2)$ and $\psi_{n_1,n_2}(x_1,x_2)$ are deduced from the vacua by acting on them with weak derivatives and multiplication operators, all operations mapping $\Sc'(\mathbb{R}^2)$ into itself. This implies that the two sets $\F_\varphi=\{\varphi_{\un}(\ux), \, n_1,n_2\geq0\}$ and $\F_\psi=\{\psi_{\un}(\ux), \, n_1,n_2\geq0\}$ are both sets of tempered distributions: $\varphi_{\un}(\ux),\psi_{\un}(\ux)\in\Sc'(\mathbb{R}^2)$, $\forall \un$. We would like to check now if, and in which sense, we can recover for these vectors the analogous of formula (\ref{A4}). In other words, we would like to understand if (and again, in which sense) $\F_\varphi$ and $\F_\psi$ are biorthonormal families of vectors.

\vspace{2mm}

{\bf Remark:--} It is not hard to check that (\ref{41}) can be used to define an {\em extended scalar product} between $\varphi_{\uzero}(\ux)$ and $\psi_{\uzero}(\ux)$, and to check that $\pin{\varphi_{\uzero}}{\psi_{\uzero}}=1$. However, this same definition does not allow any simple computation of the other scalar products $\pin{\varphi_{\un}}{\psi_{\um}}$, in general, and this is the main reason why we prefer to adopt here the definition of the $\F_e$-multiplication proposed in Section \ref{sect4}, see Definition \ref{def1}. This analysis has an important side effect, since it will allow us to check on a rather concrete situation that the definition of $\pin{.}{.}_e$ works also outside $\ltwod$. Hence, $\pin{.}{.}_e$ can really be  seen as an extension of the ordinary scal product in $\ltwod$.

\vspace{2mm}

Let $\G_e=\{e_n(x), \,n\geq0\}$ be the usual orthonormal basis of eigenfunctions of the harmonic oscillator, $$\tilde H_0=\frac{p^2}{2m_1}+\frac{1}{2}\,m_1\omega_1^2x^2,$$
where $m_1$ and $\omega_1$ are those introduced in Section \ref{sect3}. It is well known that $e_n(x)\in\Sc(\mathbb{R})$ for all $n\geq0$. Then we consider
\be
e_\un(\ux)=e_{n_1}(x_1)e_{n_2}(x_2),
\label{53}\en
and $\F_e=\{e_\un\}$. This is an orthonormal basis of $\ltwodue$ of functions, all belonging to $\scrdue$. Further, and very important, the various vectors of this basis obey ladder equalities when considered together with the bosonic operators $a_j$ and $a_j^\dagger$ in (\ref{310bis}). For instance, $$a_1^\dagger e_\un=\sqrt{n_1+1}\,e_{n_1+1,n_2},\qquad  a_2 e_\un=\sqrt{n_2}\,e_{n_1,n_2-1},$$ (or $ a_2 e_\un=0$ if $n_2=0$), and so on.

From now on we will use this particular basis to define $\pin{.}{.}_e$ as in (\ref{44}). The reason is simple: our raising operators $A_j^\dagger$ and $B_j$ can be written, see (\ref{311}), in terms of $a_j$ and $a_j^\dagger$, and their action on each $e_\un(\ux)$ in (\ref{53})  is simple\footnote{It is clear that for different physical system our choice of $\F_e$ could be different from the one here.}. Our main effort is to check that $\pin{\psi_{\uk}}{\varphi_{\ul}}_e$ makes sense, and that it is equal to $\delta_{\uk,\ul}$. This would imply that our multiplication in (\ref{44}) is useful, defined on a large set, and that the families  $\F_\varphi$ and $\F_\psi$ are biorthonormal (with respect to this extended scalar product).

To prove this claim, we need to compute
\be
\pin{\psi_{\uk}}{\varphi_{\ul}}_e=\sum_\un\overline{\psi_{\uk}}[e_\un]\,\varphi_{\ul}[e_\un],
\label{54}\en
using (\ref{44}). A general proof of the existence of this quantity is not easy. On the other hand, a direct check of the facts that $\pin{\psi_{\uk}}{\varphi_{\ul}}_e$ exists and that it is equal to $\delta_{\uk,\ul}$, it is not hard, if we restrict to few (low) values of $\uk$ and $\ul$. To do so it is convenient to check first the following equalities:
\be\left\{
\begin{array}{ll}
\varphi_{\uzero}[e_\un]=\alpha \delta_{n_1,n_2}, \qquad\qquad\qquad\qquad\qquad\varphi_{0,1}[e_\un]=\alpha \sqrt{2n_2}\delta_{n_1,n_2-1},\\ \varphi_{1,0}[e_\un]=\alpha \sqrt{2(n_2+1)}\delta_{n_1,n_2+1},\quad\qquad \varphi_{\uuno}[e_\un]=\alpha (1+2n_2)\delta_{n_1,n_2},\\
\end{array}
\right.
\label{55}\en
and
\be\left\{
\begin{array}{ll}
	\overline{\psi_{\uzero}}[e_\un]=\overline{\beta} (-1)^{n_2} \delta_{n_1,n_2}, \qquad\qquad\qquad\qquad\qquad\overline{\psi_{0,1}}[e_\un]=\overline{\beta} (-1)^{n_2+1}\sqrt{2n_2}\, \delta_{n_1,n_2-1},\\ \overline{\psi_{1,0}}[e_\un]=\overline{\beta} (-1)^{n_2}\sqrt{2(n_2+1)}\delta_{n_1,n_2+1},\quad\qquad \overline{\psi_{\uuno}}[e_\un]=\overline{\beta} (-1)^{n_2+1}(1+2n_2)\delta_{n_1,n_2}.\\
\end{array}
\right.
\label{56}\en
The proof of (some of) these identities is given in Appendix B.

We can now use (\ref{55}) and (\ref{56}) to check few of the orthonormality results needed to conclude that $\F_\varphi$ and $\F_\psi$ are $\F_e$-biorthonormal. In particular, it is easy to check that all the vectors $\psi_\uk$ and $\varphi_\ul$ are $\F_e$-orthogonal if $\uk\neq\ul$, and $\uk,\ul=(j_1,j_2)$, for all $j_1,j_2=0,1$. For instance, we have
$$
\pin{\psi_{1,0}}{\varphi_\uzero}_e=\sum_\un \overline{\psi_{1,0}}[e_\un]\varphi_{\uzero}[e_\un]=\overline{\beta}\alpha\sum_\un(-1)^{n_2}\sqrt{2(n_2+1)}\delta_{n_1,n_2+1}\delta_{n_1,n_2}=0,
$$
clearly. Similarly we have
$$
\pin{\psi_{\uuno}}{\varphi_{1,0}}_e=\sum_\un \overline{\psi_{\uuno}}[e_\un]\varphi_{1,0}[e_\un]=\overline{\beta}\alpha\sum_\un(-1)^{n_2+1}(1+2n_2)\delta_{n_1,n_2}\sqrt{2(n_2+1)}\delta_{n_1,n_2+1}=0,
$$
too, and so on. Much more interesting is the proof that, when $\uk=\ul$, $\pin{\psi_{\uk}}{\varphi_\ul}_e=1$, even in these few simple cases. Using the results in (\ref{55}) and (\ref{56}) we can get the following identities:
\be\left\{
\begin{array}{ll}
\pin{\psi_\uzero}{\varphi_\uzero}_e=\overline{\beta}\alpha\sum_k(-1)^k,\\
\pin{\psi_{1,0}}{\varphi_{1,0}}_e=\pin{\psi_{0,1}}{\varphi_{0,1}}_e=2\overline{\beta}\alpha\sum_k(-1)^k(k+1),\\
\pin{\psi_{\uuno}}{\varphi_{\uuno}}_e=-\overline{\beta}\alpha\sum_k(-1)^k(2k+1)^2.\\
\end{array}
\right.
\label{57}\en
None of these series converges in ordinary sense. However, they are all Abel-convergent. In fact, since
$$
A-\sum_k(-1)^k=\frac{1}{2}, \qquad A-\sum_k(-1)^kk=-\frac{1}{4}, \qquad  A-\sum_k(-1)^kk^2=0,
$$
if we take $\alpha\overline{\beta}=2$ we conclude that 
$$
\pin{\psi_\uzero}{\varphi_\uzero}_e=\pin{\psi_{1,0}}{\varphi_{1,0}}_e=\pin{\psi_{0,1}}{\varphi_{0,1}}_e=\pin{\psi_{\uuno}}{\varphi_{\uuno}}_e=1.
$$

Of course, what we have explicitly checked here is not a general result. In other words, this is just an indication that, see (\ref{54}), $\pin{\psi_{\uk}}{\varphi_{\ul}}_e$ exists and is equal to $\delta_{\uk,\ul}$. This is what we will discuss next.

\subsection{From few to many}

Our aim now is trying to generalize, as much as possible, formulas in (\ref{57}) so to check that $\F_\varphi$ and $\F_\psi$ are indeed $\F_e$-biorthonormal. However, as we will see, the existence of $\pin{\psi_{\uk}}{\varphi_{\ul}}_e$ will be, for us, a working assumption, motivated by the results we have deduced previously. We hope to be able to produce a general proof of this existence in a close future. It is quite likely that the difficulty in proving this result is connected with the fact that the various series above are not convergent in the usual sense, but only Abel-convergent. For this reason we believe that the preliminary analysis proposed here and before is relevant and useful for a deeper understanding of the situation.

We start proving that, calling $N_j=B_jA_j$ and $N_j^\dagger$ its adjoint, the following weak eigenvalue equations are satisfied:
\be
\pin{\Phi}{N_j\varphi_\ul}=l_j\pin{\Phi}{\varphi_\ul}, \qquad \pin{\Phi}{N_j^\dagger\psi_\ul}=l_j\pin{\Phi}{\psi_\ul},
\label{58}\en
for all $\ul\in\mathbb{N}_0^2$, for all $\Phi(x_1,x_2)\in\scrdue$, and for $j=1,2$.

Using the definition of $N_1=B_1A_1$ in terms of multiplication and derivative operators, see (\ref{310bis}) and (\ref{311}), and using the fact that $\Phi(x_1,x_2)\in\scrdue$ and that $\scrdue$ is stable under the action of the various operators involved in the game, we have
$$
\pin{\Phi}{N_1\varphi_\un}=\pin{N_1^\dagger\Phi}{\varphi_\un}=\frac{1}{\sqrt{n_1!}}\pin{N_1^\dagger\Phi}{B_1^{n_1}\varphi_{0,n_2}}=\frac{1}{\sqrt{n_1!}}\pin{{B_1^\dagger}^{n_1}N_1^\dagger\Phi}{\varphi_{0,n_2}}.
$$
Now, since $\Phi(x_1,x_2)\in\scrdue$, which is stable, we can safely rewrite
$$
{B_1^\dagger}^{n_1}N_1^\dagger\Phi=\left(\left[{B_1^\dagger}^{n_1},N_1^\dagger\right]+N_1^\dagger{B_1^\dagger}^{n_1}\right)\Phi=\left(n_1{B_1^\dagger}^{n_1}+N_1^\dagger{B_1^\dagger}^{n_1}\right)\Phi,
$$
where the last equality can be proved by induction on $n_1$. Hence, using again the definition of the weak derivative, we have
$$
\frac{1}{\sqrt{n_1!}}\pin{{B_1^\dagger}^{n_1}N_1^\dagger\Phi}{\varphi_{0,n_2}}=\frac{1}{\sqrt{n_1!}}\pin{(n_1{B_1^\dagger}^{n_1}+N_1^\dagger{B_1^\dagger}^{n_1})\Phi}{\varphi_{0,n_2}}=n_1\pin{\Phi}{\varphi_\un},
$$
since, in particular, $\pin{N_1^\dagger{B_1^\dagger}^{n_1}\Phi}{\varphi_{0,n_2}}=\pin{{B_1^\dagger}^{n_1}\Phi}{N_1\varphi_{0,n_2}}=0$. The other equalities in (\ref{58}) can be proved in a similar way.

This result has interesting consequences like those given in the rest of this section.

\begin{prop}\label{prop2}
	Assume that, for some $\uk$ and $\ul$, $\pin{\psi_\uk}{\varphi_\ul}_e$ exists. Then $\pin{\psi_\uk}{N_j\varphi_\ul}_e$ and $\pin{N_j^\dagger\psi_\uk}{\varphi_\ul}_e$, $j=1,2$, also exist and
	\be
	\pin{\psi_\uk}{N_j\varphi_\ul}_e=l_j\pin{\psi_\uk}{\varphi_\ul}_e, \qquad \pin{N_j^\dagger\psi_\uk}{\varphi_\ul}_e=k_j\pin{\psi_\uk}{\varphi_\ul}_e.
	\label{510}\en
\end{prop}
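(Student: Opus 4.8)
The plan is to reduce the whole statement to the weak eigenvalue identities (\ref{58}), used with the specific test functions $\Phi=e_\un$, together with the fact that the $\F_e$-multiplication is, by Definition \ref{def1}, nothing but a series evaluated term by term. Throughout, one uses that $N_j\varphi_\ul$ and $N_j^\dagger\psi_\uk$ are again tempered distributions, since $A_j,B_j$ and their adjoints are built out of multiplication and (weak) differentiation operators, which leave $\scrdue'$ stable; hence all the pairings $\pin{e_\un}{N_j\varphi_\ul}$ and $\pin{e_\un}{N_j^\dagger\psi_\uk}$ are well defined, $e_\un$ being a genuine Schwartz function.

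First I would unwind the definition, writing, with the notation (\ref{42bis}) and using that each $e_\un$ is real,
\[
\pin{\psi_\uk}{N_j\varphi_\ul}_e=\sum_\un\overline{\psi_\uk}[e_\un]\,(N_j\varphi_\ul)[e_\un]=\sum_\un\pin{\psi_\uk}{e_\un}\,\pin{e_\un}{N_j\varphi_\ul}.
\]
Since $e_\un\in\scrdue$, the first relation in (\ref{58}) applies with $\Phi=e_\un$ and yields $\pin{e_\un}{N_j\varphi_\ul}=l_j\,\pin{e_\un}{\varphi_\ul}$. Thus the $\un$-th summand above is exactly $l_j$ times the $\un$-th summand of the series defining $\pin{\psi_\uk}{\varphi_\ul}_e$. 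Since the latter converges by hypothesis, so does the former, and its sum is $l_j\,\pin{\psi_\uk}{\varphi_\ul}_e$; this proves the first identity in (\ref{510}), including the existence claim.

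For the second identity I would move the operator onto the first slot instead. Writing $\pin{N_j^\dagger\psi_\uk}{\varphi_\ul}_e=\sum_\un\pin{N_j^\dagger\psi_\uk}{e_\un}\,\pin{e_\un}{\varphi_\ul}$, I would use the (termwise) conjugate-symmetry $\pin{N_j^\dagger\psi_\uk}{e_\un}=\overline{\pin{e_\un}{N_j^\dagger\psi_\uk}}$ — i.e. Result $\sharp 1$ restricted to a distribution paired with a test function — and then the second relation in (\ref{58}), applied to $\psi_\uk$ (so with eigenvalue $k_j$) and $\Phi=e_\un$, to get $\pin{e_\un}{N_j^\dagger\psi_\uk}=k_j\,\pin{e_\un}{\psi_\uk}$. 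Since $k_j\in\mathbb{N}_0$ is real, this gives $\pin{N_j^\dagger\psi_\uk}{e_\un}=k_j\,\pin{\psi_\uk}{e_\un}$, so again each summand is $k_j$ times the corresponding summand of the convergent series for $\pin{\psi_\uk}{\varphi_\ul}_e$, whence convergence and the value $k_j\,\pin{\psi_\uk}{\varphi_\ul}_e$.

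I do not expect a genuine obstacle here: the argument never re-sums or rearranges the series — (\ref{58}) is invoked inside the sum, one index $\un$ at a time — so the hypothesis that $\pin{\psi_\uk}{\varphi_\ul}_e$ converges is transferred directly, multiplied by the scalar $l_j$ or $k_j$. The only things to keep honest are the bookkeeping points already mentioned: stability of $\scrdue'$ under the operators in play so that $N_j\varphi_\ul,N_j^\dagger\psi_\uk\in\scrdue'$, reality of the eigenvalues so that the complex conjugate in the second computation disappears, and the identity $(N_j\varphi_\ul)[e_\un]=l_j\,\varphi_\ul[e_\un]$, $\overline{(N_j^\dagger\psi_\uk)}[e_\un]=k_j\,\overline{\psi_\uk}[e_\un]$, which is just (\ref{58}) read in the $\overline{h}[c]$ notation.
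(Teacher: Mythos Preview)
Your argument is correct and follows essentially the same route as the paper: unfold the definition of $\pin{\cdot}{\cdot}_e$ as a series in $\un$, apply the weak eigenvalue identity (\ref{58}) with $\Phi=e_\un$ to each term, and then pull out the scalar $l_j$ (respectively $k_j$) from the sum, whose convergence is the hypothesis. The only difference is that you spell out the second identity explicitly (with the conjugate and the reality of $k_j$), whereas the paper simply remarks that it is proved ``in a similar way''.
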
 

\begin{proof}
	By definition we have, for instance,
	$$
	\pin{\psi_\uk}{N_1\varphi_\ul}_e=\sum_\un\overline{\psi_{\uk}}[e_\un]\,(N_1\varphi_{\ul})[e_\un].
	$$
	But, since $e_\un(\ux)\in\scrdue$, we can use (\ref{58}) and we get $(N_1\varphi_{\ul})[e_\un]=\pin{e_\un}{N_1\varphi_{\ul}}=l_1\pin{e_\un}{\varphi_{\ul}}$. Hence 
	$$
		\pin{\psi_\uk}{N_j\varphi_\ul}_e=l_1\sum_\un\overline{\psi_{\uk}}[e_\un]\,\varphi_{\ul}[e_\un]=l_1\pin{\psi_\uk}{\varphi_\ul}_e,
	$$
	as we had to prove. The other equalities in (\ref{510}) can be proved in a similar way.

\end{proof}

We have already commented that, in general, taken $F$ and $G$ in $\Sc'(\mathbb{R}^2)$, and a given operator $X$, even if $X^\dagger F, XG\in\Sc'(\mathbb{R}^2)$, there is no general reason for $\pin{X^\dagger F}{G}_e$ and $\pin{F}{XG}_e$ to exist, and to have $\pin{X^\dagger F}{G}_e=\pin{F}{XG}_e$. However, this may happen in some particular cases. In fact, we can check that
\be
\pin{N_j^\dagger\psi_\uk}{\varphi_\ul}_e=\pin{\psi_\uk}{N_j\varphi_\ul}_e
\label{511}\en
$\forall \uk,\ul$, $j=1,2$. The proof of this equality is rather technical and is given in Appendix C. Now it is clear that, as in the standard, Hilbert space, settings, the following result holds:
\begin{prop}\label{prop3}
	If $\uk\neq\ul$ then
	\be
	\pin{\psi_\uk}{\varphi_\ul}_e=0.
	\label{512}\en
\end{prop}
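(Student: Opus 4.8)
The plan is to mimic the classical Hilbert-space argument showing that eigenvectors of a diagonalizable operator corresponding to distinct eigenvalues are biorthogonal, using Proposition~\ref{prop2} together with the symmetry relation~(\ref{511}) as the substitutes for the usual self-adjointness manipulations. Concretely, suppose $\uk\neq\ul$, so that there is an index $j\in\{1,2\}$ with $k_j\neq l_j$. First I would invoke the hypothesis (stated at the start of Subsection ``From few to many'' as the standing working assumption) that $\pin{\psi_\uk}{\varphi_\ul}_e$ exists; by Proposition~\ref{prop2} this already guarantees that $\pin{\psi_\uk}{N_j\varphi_\ul}_e$ and $\pin{N_j^\dagger\psi_\uk}{\varphi_\ul}_e$ exist, with
\[
\pin{\psi_\uk}{N_j\varphi_\ul}_e=l_j\pin{\psi_\uk}{\varphi_\ul}_e,
\qquad
\pin{N_j^\dagger\psi_\uk}{\varphi_\ul}_e=k_j\pin{\psi_\uk}{\varphi_\ul}_e.
\]

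The second step is to feed these two identities into the symmetry relation~(\ref{511}), namely $\pin{N_j^\dagger\psi_\uk}{\varphi_\ul}_e=\pin{\psi_\uk}{N_j\varphi_\ul}_e$, which is exactly the ingredient proved in Appendix~C for the operators $N_j$. Substituting the two evaluations above, one obtains $k_j\pin{\psi_\uk}{\varphi_\ul}_e=l_j\pin{\psi_\uk}{\varphi_\ul}_e$, i.e. $(k_j-l_j)\pin{\psi_\uk}{\varphi_\ul}_e=0$. Since $\uk\neq\ul$ means $k_j\neq l_j$ for at least one $j$, choosing that $j$ forces $\pin{\psi_\uk}{\varphi_\ul}_e=0$, which is the claim~(\ref{512}).

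The only genuinely delicate point is making sure that every $\F_e$-multiplication symbol appearing in the chain of equalities actually refers to a convergent series, since $\pin{.}{.}_e$ is only partially defined on $\Sc'(\mathbb{R}^2)$; but this is precisely what Proposition~\ref{prop2} and relation~(\ref{511}) are designed to secure, given the initial assumption that $\pin{\psi_\uk}{\varphi_\ul}_e$ itself exists. So the proof is essentially a two-line deduction once those two earlier results are in hand; the real work has already been discharged in Proposition~\ref{prop2} and in Appendix~C, and the main obstacle — existence/convergence of the relevant series — has been packaged into the standing hypothesis. I would present the argument for a fixed $j$ with $k_j\neq l_j$ and remark that no other case distinction is needed.
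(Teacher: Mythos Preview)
Your argument is correct and is exactly the route the paper takes: the paper's own proof simply says it is ``identical to the usual one, using the (non trivial, here) identities in (\ref{510}) and (\ref{511}),'' i.e.\ precisely the eigenvalue comparison via Proposition~\ref{prop2} and the symmetry relation~(\ref{511}) that you spelled out.
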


\begin{proof}
	The proof is identical to the usual one, using the (non trivial, here) identities in (\ref{510}) and (\ref{511}).

\end{proof}

Summarizing, what we get here is a strong indication that the results deduced before, using (\ref{55}) and (\ref{56}), can be generalized and that the sets $\F_\varphi$ and $\F_\psi$ are indeed two $\F_e$-biorthonormal sets of tempered distributions, and are made of weak eigenstates of the number operators $N_j$ and $N_j^\dagger$ and, therefore, of weak eigenstates of the Hamiltonian $H$ in (\ref{313}) and of $H^\dagger$. In our analysis, we were somehow forced to introduce a new, potentially interesting, multiplication between distributions.

\section{Conclusions}\label{sect6}

There are several open points in the analysis proposed in this paper. First of all, we believe that the idea of $\F_e$-multiplication of distributions may have interesting consequences and applications. Also, their properties should be analyzed in more details than those considered here. We hope to consider this aspect of our analysis soon.

There are also other aspects of the system considered here which should be clarified: if on one side we were able to compute explicitly some of the scalar products $\pin{\psi_\uk}{\varphi_\ul}_e$, showing that, as expected, they are zero or one, we have no general argument showing that $\pin{\psi_\uk}{\varphi_\ul}_e$ does exist for all $\uk$ and $\ul$. Still, under the hypothesis that this exists, we were able to deduce several interesting results, including the biorthogonality of the vectors $\psi_\uk$ and $\varphi_\ul$. However, the nature of $\F_\varphi$ and $\F_\psi$ as possible bases (of some kind, see \cite{baginbagbook}) is also still to be understood.

Another interesting aspect is the role of the Abel summation appearing in several computations. This creates more questions: is this related to the physical system? How?

Finally, on a more physical side, the Lagrangian considered in (\ref{31}) and (\ref{32}) is just a particular possible choice among many possibilities. What does it change if we consider a different $L_1$? Which kind of physics can we describe? And what is the role of distributions, if any, for general dissipative systems?

We hope to be able to answer some of these questions soon.

\section*{Acknowledgements}

The author acknowledges partial financial support from Palermo University (via FFR2021 "Bagarello") and from G.N.F.M. of the INdAM. The author is grateful to Vincenzo Sottile, Gioele Arcoleo, Federico Roccati and Francesco Gargano for many discussions at a preliminary stage of this paper.

\section*{Data accessibility statement}

This work does not have any experimental data.

%
%
%
%
%

\section*{Funding statement}

The author acknowledges partial financial support from Palermo University, via FFR2021 "Bagarello".

\renewcommand{\theequation}{A.\arabic{equation}}

\section*{Appendix A: a micro review on pseudo-bosons}\label{appendixA}

Let $\Hil$ be a given Hilbert space with scalar product $\left<.,.\right>$ and related norm $\|.\|$. 

\vspace{2mm}

Let $a$ and $b$ be two operators
on $\Hil$, with domains $D(a)$ and $D(b)$ respectively, $a^\dagger$ and $b^\dagger$ their adjoint, and let $\D$ be a dense subspace of $\Hil$
such that $a^\sharp\D\subseteq\D$ and $b^\sharp\D\subseteq\D$, where with $x^\sharp$ we indicate $x$ or $x^\dagger$. Of course, $\D\subseteq D(a^\sharp)$
and $\D\subseteq D(b^\sharp)$.

\begin{defn}\label{defc21}
	The operators $(a,b)$ are $\D$-pseudo bosonic  if, for all $f\in\D$, we have
	\be
	a\,b\,f-b\,a\,f=f.
	\label{C1}\en
\end{defn}

When CCR are replaced by (\ref{C1}), it is necessary to impose some reasonable conditions which are verified in explicit models. In particular, our starting assumptions are the following:

\vspace{2mm}

{\bf Assumption $\D$-pb 1.--}  there exists a non-zero $\varphi_{ 0}\in\D$ such that $a\,\varphi_{ 0}=0$.

\vspace{1mm}

{\bf Assumption $\D$-pb 2.--}  there exists a non-zero $\Psi_{ 0}\in\D$ such that $b^\dagger\,\Psi_{ 0}=0$.

\vspace{2mm}

It is obvious that, since $\D$ is stable under the action of the operators introduced above,  $\varphi_0\in D^\infty(b):=\cap_{k\geq0}D(b^k)$ and  $\Psi_0\in D^\infty(a^\dagger)$, so
that the vectors \be \varphi_n:=\frac{1}{\sqrt{n!}}\,b^n\varphi_0,\qquad \Psi_n:=\frac{1}{\sqrt{n!}}\,{a^\dagger}^n\Psi_0, \label{A2}\en
$n\geq0$, can be defined and they all belong to $\D$. Then, they also belong to the domains of $a^\sharp$, $b^\sharp$ and $N^\sharp$, where $N=ba$. We see that, from a practical point of view, $\D$ is the natural space to work with and, in this sense, it is even more relevant than $\Hil$. Let's put $\F_\Psi=\{\Psi_{ n}, \,n\geq0\}$ and
$\F_\varphi=\{\varphi_{ n}, \,n\geq0\}$.
It is  simple to deduce the following lowering and raising relations:
\be
\left\{
\begin{array}{ll}
	b\,\varphi_n=\sqrt{n+1}\varphi_{n+1}, \qquad\qquad\quad\,\, n\geq 0,\\
	a\,\varphi_0=0,\quad a\varphi_n=\sqrt{n}\,\varphi_{n-1}, \qquad\,\, n\geq 1,\\
	a^\dagger\Psi_n=\sqrt{n+1}\Psi_{n+1}, \qquad\qquad\quad\, n\geq 0,\\
	b^\dagger\Psi_0=0,\quad b^\dagger\Psi_n=\sqrt{n}\,\Psi_{n-1}, \qquad n\geq 1,\\
\end{array}
\right.
\label{C3}\en as well as the eigenvalue equations $N\varphi_n=n\varphi_n$ and  $N^\dagger\Psi_n=n\Psi_n$, $n\geq0$. In particular, as a consequence
of these last two equations,  if we choose the normalization of $\varphi_0$ and $\Psi_0$ in such a way $\left<\varphi_0,\Psi_0\right>=1$, we deduce that
\be \left<\varphi_n,\Psi_m\right>=\delta_{n,m}, \label{A4}\en
for all $n, m\geq0$. Hence $\F_\Psi$ and $\F_\varphi$ are biorthogonal.

The analogy with ordinary bosons suggests us to consider the following:

\vspace{2mm}

{\bf Assumption $\D$-pb 3.--}  $\F_\varphi$ is a basis for $\Hil$.

\vspace{1mm}

This is equivalent to requiring that $\F_\Psi$ is a basis for $\Hil$ as well. However, several  physical models show that $\F_\varphi$ is {\bf not} a basis for $\Hil$, but it is still complete in $\Hil$. This suggests to adopt the following weaker version of  Assumption $\D$-pb 3, \cite{baginbagbook}:

\vspace{2mm}

{\bf Assumption $\D$-pbw 3.--}  For some subspace $\G$ dense in $\Hil$, $\F_\varphi$ and $\F_\Psi$ are $\G$-quasi bases.

\vspace{2mm}
This means that, for all $f$ and $g$ in $\G$,
\be
\left<f,g\right>=\sum_{n\geq0}\left<f,\varphi_n\right>\left<\Psi_n,g\right>=\sum_{n\geq0}\left<f,\Psi_n\right>\left<\varphi_n,g\right>,
\label{A4b}
\en
which can be seen as a weak form of the resolution of the identity, restricted to $\G$.

We refer to \cite{bagspringer,baginbagbook} for more details.

\renewcommand{\theequation}{A.\arabic{equation}}

\section*{Appendix B: on formulas (\ref{55}) and (\ref{56})}\label{appendixB}

Let us compute first $\varphi_{\uzero}[e_\un]$. Since  $\varphi_{00}(x_1,x_2)=\alpha\delta(x_1-x_2)$, from (\ref{53}) we get
$$
\varphi_{\uzero}[e_\un]=\alpha \int_{\mathbb{R}^2}dx_1\,dx_2\delta(x_1-x_2)e_{n_1}(x_1)e_{n_2}(x_2)=\alpha \int_{\mathbb{R}}dx_1e_{n_1}(x_1)e_{n_2}(x_1)=\alpha\delta_{n_1,n_2},
$$ 
due to the orthonormality of the $e_n(x)$'s. Similarly, recalling that $e_n(-x)=(-1)^ne_n(x)$, we have
$$
\overline{\psi_{\uzero}}[e_\un]=\overline{\beta} \int_{\mathbb{R}^2}dx_1\,dx_2\delta(x_1+x_2)e_{n_1}(x_1)e_{n_2}(x_2)=\overline{\beta} \int_{\mathbb{R}}dx_1\,e_{n_1}(x_1)e_{n_2}(-x_1)=
$$ 
$$
=\overline{\beta}(-1)^{n_2} \int_{\mathbb{R}}dx_1\,e_{n_1}(x_1)e_{n_2}(x_1)=\overline{\beta}(-1)^{n_2}\delta_{n_1,n_2}.
$$
Next we compute $\varphi_{0,1}[e_\un]=\pin{e_\un}{\varphi_{0,1}}=\pin{e_\un}{B_2\varphi_{\uzero}}=\pin{B_2^\dagger e_\un}{\varphi_{\uzero}}$, where the last equality follows from the definition of the weak derivative of distributions. Now,  
$$B_2^\dagger e_\un=\frac{1}{\sqrt{2}}(a_1^\dagger+a_2)e_\un=\frac{1}{\sqrt{2}}\left(\sqrt{n_1+1}e_{n_1+1,n_2}+\sqrt{n_2}e_{n_1,n_2-1}\right),
$$
 the last term being zero if $n_2=0$. Then, in view of what deduced before, we have $\varphi_{\uzero}[e_{n_1+1,n_2}]=\alpha \delta_{n_1+1,n_2}$ and $\varphi_{\uzero}[e_{n_1,n_2-1}]=\alpha \delta_{n_1,n_2-1}$, so that $\varphi_{0,1}[e_\un]=\alpha \sqrt{2n_2}\delta_{n_1,n_2-1}$ follows. We conclude this Appendix by proving the identity $\varphi_{\uuno}[e_\un]=\alpha (1+2n_2)\delta_{n_1,n_2}$.  Similar computations can be repeated to check all the other identities in (\ref{55}) and (\ref{56}).

We have
$$
\varphi_{\uuno}[e_\un]=\pin{e_\un}{\varphi_\uuno}=\pin{e_\un}{B_1B_2\varphi_\uzero}=\pin{B_2^\dagger B_1^\dagger e_\un}{\varphi_\uzero}=
$$
$$
=\frac{1}{2}\pin{\left(a_1^\dagger a_1+a_1^\dagger a_2^\dagger+a_1a_2+a_2a_2^\dagger\right)e_\un}{\varphi_\uzero}=$$
$$=\frac{1}{2}\left(n_1\varphi_{\uzero}[e_\un]+\sqrt{(n_1+1)(n_2+1)}\varphi_{\uzero}[e_\un+\uuno]+\sqrt{n_1n_2}\varphi_{\uzero}[e_\un-\uuno]+(1+n_2)\varphi_{\uzero}[e_\un]\right)=
$$
$$
=\frac{\alpha}{2}\delta_{n_1,n_2}\left(n_1+\sqrt{(n_1+1)(n_2+1)}+\sqrt{n_1n_2}+(1+n_2)\right),
$$
from which our result in (\ref{55}) follows. Notice that we have used here the notation $\un\pm\uuno=(n_1\pm1,n_2\pm2)$, and used three times the identity $\varphi_{\uzero}[e_\un]=\alpha\delta_{n_1,n_2}$ proved before.

\renewcommand{\theequation}{A.\arabic{equation}}

\section*{Appendix C: on the identity in (\ref{511})}\label{appendixC}

We start rewriting
\be
\pin{\psi_\uk}{N_1\varphi_\ul}_e=\sum_\un\overline{\psi_{\uk}}[e_\un]\,(N_1\varphi_{\ul})[e_\un].
\label{c1}\en
We have already seen in the Proposition \ref{prop3} that $(N_1\varphi_{\ul})[e_\un]=l_1\varphi_{\ul}[e_\un]$. Here we need to rewrite it in a different form, noticing first that $(N_1\varphi_{\ul})[e_\un]=\varphi_{\ul}[N_1^\dagger e_\un]$. Next, using (\ref{311}), is possible to compute $N_1^\dagger e_\un$ and to check that
$$
\varphi_{\ul}[N_1^\dagger e_\un]=\frac{1}{2}\left((n_1-n_2-1)\varphi_{\ul}[e_\un]+\sqrt{(n_1+1)(n_2+1)}\,\varphi_{\ul}[e_{\un+\uuno}]-\sqrt{n_1n_2}\,\varphi_{\ul}[e_{\un-\uuno}]\right).
$$
which, using (\ref{c1}), returns
$$
\pin{\psi_\uk}{N_1\varphi_\ul}_e=$$
$$=\frac{1}{2}\sum_\un\overline{\psi_{\uk}}[e_\un]\left((n_1-n_2-1)\varphi_{\ul}[e_\un]+\sqrt{(n_1+1)(n_2+1)}\,\varphi_{\ul}[e_{\un+\uuno}]-\sqrt{n_1n_2}\,\varphi_{\ul}[e_{\un-\uuno}]\right).
$$ It is clear that the series converge since the left-hand side in (\ref{c1}) exists, because of  (\ref{510}) and of our working assumptions.
In a similar way we have
$$
\pin{N_1^\dagger\psi_\uk}{\varphi_\ul}_e=\sum_\un\overline{\psi_{\uk}}[N_1e_\un]\,\varphi_{\ul}[e_\un],
$$
and computing $N_1e_\un$, we get
$$
\overline{\psi_{\uk}}[N_1e_\un]=\frac{1}{2}\left((n_1-n_2-1)\overline{\psi_{\uk}}[e_\un]-\sqrt{(n_1+1)(n_2+1)}\,\overline{\psi_{\uk}}[e_{\un+\uuno}]+\sqrt{n_1n_2}\,\overline{\psi_{\uk}}[e_{\un-\uuno}]\right),
$$
so that 
$$
\pin{N_1^\dagger\psi_\uk}{\varphi_\ul}_e=
$$
$$
=\frac{1}{2}\sum_\un\left((n_1-n_2-1)\overline{\psi_{\uk}}[e_\un]-\sqrt{(n_1+1)(n_2+1)}\,\overline{\psi_{\uk}}[e_{\un+\uuno}]+\sqrt{n_1n_2}\,\overline{\psi_{\uk}}[e_{\un-\uuno}]\right)\varphi_{\ul}[e_\un].
$$
Now the fact that $\pin{\psi_\uk}{N_1\varphi_\ul}_e=\pin{N_1^\dagger\psi_\uk}{\varphi_\ul}_e$ follows from a direct comparisons of the two results deduced here (with some change of variable).

The others identities can be proved in a similar way.

\end{document}